\def\BibTeX{{\rm B\kern-.05em{\sc i\kern-.025em b}\kern-.08em
    T\kern-.1667em\lower.7ex\hbox{E}\kern-.125emX}}
\newcommand{\bE} { {\mathbb E}}
\newcommand{\bP} { {\mathbb P}}
\newcommand{\bR} { {\mathbb R}}
\newcommand{\bZ} { {\mathbb Z}}
\newcommand{\bS} { {\mathbb S}}
\newcommand{\kl} { {\mathrm {KL}}}
\newcommand{\cW} { {\mathcal W}}
\newcommand{\cF} { {\mathcal F}}
\newcommand{\cN} { {\mathcal N}}
\newcommand{\tr} { {\operatorname{Tr}}}
\newtheorem{secthm}{Theorem}[section]
\newtheorem{seccor}[secthm]{Corollary}
\newtheorem{seclem}[secthm]{Lemma}
\newtheorem{secprob}[secthm]{Problem}
\newtheorem{secdefn}[secthm]{Definition}
\newtheorem{secrem}[secthm]{Remark}
\def\red{\hfill $\lhd$}
\begin{document}

\IEEEoverridecommandlockouts 
\overrideIEEEmargins

\title{\LARGE \bf
Privacy Preservation for Statistical Input in Dynamical Systems\\
}

\author{
    Le Liu, Yu Kawano, and Ming Cao
    \\
    \thanks{The work of Liu and Cao was supported in part by the Netherlands Organization for Scientific Research (NWO-Vici-19902). The work of Liu is partly supported by Chinese Scholarship Council. The work of Yu Kawano was supported in part by JSPS KAKENHI Grant Number JP22KK0155.}
\thanks{Le Liu and Ming Cao are with the Faculty of Science and Engineering, Univerisity of Groningen, 9747 AG Groningen, The Netherlands {\tt \small \{le.liu, m.cao\}@rug.nl}}
    \thanks{Yu Kawano is with the Graduate School of Advance Science and Engineering, Hiroshima Univeristy, Higashi-hiroshima 739-8527, Japan 
        {\tt\small ykawano@hiroshima-u.ac.ip}}
}

\maketitle

\begin{abstract}
    This paper addresses the challenge of privacy preservation for statistical inputs in dynamical systems. Motivated by an autonomous building application, we formulate a privacy preservation problem for statistical inputs in linear time-invariant systems. What makes this problem widely applicable is that the inputs, rather than being assumed to be deterministic, follow a probability distribution, inherently embedding privacy-sensitive information that requires protection. This formulation also presents a technical challenge as conventional differential privacy mechanisms are not directly applicable. Through rigorous analysis, we develop strategy to achieve $(0, \delta)$ differential privacy through adding noise. Finally, the effectiveness of our methods is demonstrated by revisiting the autonomous building application.
\end{abstract}


\section{Introduction}
The rapid development of the Internet of Things (IoT) and cloud computing technologies has revolutionized industrial automation, enabling networked control systems to perform more efficiently \cite{gupta2009networked}. In such systems, sensors serve as critical components, continuously collecting data to support real-time control, diagnostics, and performance monitoring. However, the widespread use of sensor data raises privacy concerns, especially when sensitive measurements are shared with untrusted third parties. For example, sensor readings, such as carbon dioxide ($\mathrm{CO}_2$) levels and temperature, can reveal the parameters of statistical inputs, thereby exposing the occupancy level of a building\cite{ebadat2015regularized}. These vulnerabilities highlight the increasing conflict between data utility and privacy of statistical inputs in networked control systems. As a result, developing privacy-preserving sensor data sharing solutions for networked control systems has become a key area of research, combining fields like control theory, cyber-security techniques, and data science \cite{tourani2017security, cortes2016differential}.
\subsection{Literature Review}
In the field of privacy preservation in networked control systems, several related studies have been conducted, particularly in distributed optimization \cite{wang2022decentralized, wang2022quantization}, filtering \cite{le2013differentially}, and controller design \cite{kawano2020design, Yu2021, liu2024design}. Differential privacy has been employed to develop privacy-preserving filters, as demonstrated in \cite{le2013differentially}, where the authors also explored the relationship between the $H_{\infty}$ norm of a system and differential privacy.  

Distinct from \cite{le2013differentially}, the work in \cite{kawano2020design} investigated the interplay between differential privacy and input observability, highlighting that even a small amount of noise can significantly enhance privacy in systems with low observability. Following \cite{kawano2020design}, \cite{Yu2021} has generalized the results to protect privacy in modular control. Instead of introducing noise to system outputs, \cite{nekouei2022model} proposed a model randomization approach, where model parameters are randomized to preserve the privacy of the deterministic inputs. To preserve privacy for statistical inputs, privacy-preserving filters are designed using mutual information as a privacy metric \cite{bassi2020statistical}.

However, in the aforementioned privacy preservation approaches for differential privacy, privacy-sensitive information is typically assumed to be deterministic. In practical applications, such privacy information may instead correspond to statistical inputs. For instance, the $\mathrm{CO}_2$ input fluctuating with the occupancy level in a building can be represented as a stochastic input rather than a fixed value. In such cases, privacy protection using differential privacy should extend to the probability distribution of the input itself, an aspect not addressed in the existing literature.

\subsection{Contribution}
In this paper, we address the problem of privacy preservation for statistical inputs in dynamical systems. We begin by illustrating our motivation through a smart building application. Then, a general privacy preservation problem is formulated  for statistical inputs in linear time-invariant systems. To facilitate differential privacy analysis, we introduce an adjacency relationship that captures the statistical differences between the probability distributions of the inputs.

To design noise for privacy preservation, two key technical lemmas are established. The first lemma shows a relationship between the symmetrized Kullback–Leibler (KL) divergence and the Wasserstein distance in Gaussian distributions. The second lemma provides an upper bound on the Wasserstein distance between two outputs in terms of the Wasserstein distance between their corresponding inputs. Leveraging these lemmas along with Pinsker’s inequality, we derive a theorem that offers insights into preserving the privacy of probability distributions of the inputs.

Subsequently, we propose a general noise design strategy, followed by a specialized strategy for cases where the initial state is independent and follows a publicly known distribution. Finally, we revisit the motivating example to demonstrate the effectiveness of our proposed methods.
\subsection{Organization}
The remainder of this paper is organized as follows. In Section~\ref{sec:pre}, we provide a problem formulation and illustrate the generality of the proposed adjacency set. In Section~\ref{sec:dp}, we derive sufficient conditions for differential privacy and obtain noise adding strategies.  Section~\ref{sec:sim} provides a numerical examples to illustrate the proposed results. Finally, Section~\ref{sec:con} concludes the paper. All proofs are given in the Appendix.
\subsection{Notation}
The sets of real numbers is denoted by $\bR$. For $P \in \bR^{n \times n}$, $P \succ 0$ (resp. $P \succeq 0$) means that $P$ is symmetric and positive (resp. semi) definite. We also use $\mathbb{S}_{++}^n$ (resp. $\mathbb{S}_{+}^n$) to denote the set of positive (resp. semi) definite matrices. $\cN_n(\mu, \Sigma)$ represents a Gaussian distribution with the mean $\mu \in \bR^n$ and covariance $\Sigma \in \bS_{++}^{n}$. $\tr(\cdot)$ denotes the trace of a square matrix. A probability space is denoted by $(\Omega, \cF, \bP )$, where $\Omega$, $\cF$, and $\bP $ denote the sample space, $\sigma$-algebra, and probability measure, respectively. The expectation of a random variable is denoted by $\bE[\cdot]$. For $p \geq 1$ and Borel probability measures $\mu, \nu$ on $\bR^n$ with finite $p$-th moments, $p$-Wasserstein distance is defined by \cite{panaretos2019statistical}:

$$
\cW_p(\mu, \nu) = \left( \inf_{\gamma \in \Gamma(\mu, \nu)} \int_{\bR^n \times \bR^n} |x-y|_2^p \, \mathrm{d}\gamma(x, y) \right)^{1/p},
$$
where $\Gamma(\mu, \nu)$ denotes the set of all the couplings (joint distributions) of $\mu$ and $\nu$ with marginals $\mu$ and $\nu$.

For two multivariate Gaussian distributions $\bP_1 = \cN_n(m_1, \Gamma_1) \quad \text{and } \bP_2 = \quad \cN_n(m_2, \Gamma_2),
$
it holds that \cite{panaretos2019statistical}
\begin{align}
\label{eq:2wa}
\mathcal{W}_2^2(\bP_1, \bP_2) &= | m_1 - m_2 |^2_2 + \tr ( \Sigma_1 \nonumber \\
+& \Sigma_2 - 2 (\Sigma_1^{1/2} \Sigma_2 \Sigma_1^{1/2})^{1/2} ).
\end{align}

\section{Problem Formulation}
\label{sec:pre}
In this section, we present a formulation of the privacy preservation for statistical input in dynamical systems. While previous studies have considered deterministic inputs as privacy-sensitive information \cite{kawano2020design}, such an approach fails to capture the inherent randomness of real-world scenarios.  The next subsection presents a motivating example in which statistical input is regarded as natural privacy-sensitive information.
\subsection{A Motivating Example}
 Inspired by \cite{nekouei2022model}, we consider a building automation application in which a sensor measures the $\mathrm{CO}_2$ level in a room. The response of $\mathrm{CO}_2$ concentration to human presence can be modeled by:
\begin{align*}
x(t+1) & =a x(t)+ bu(t), \\
y(t) & =x(t)+v(t),
\end{align*}
where $a \in (0,1)$ and $b$ are constants, $x(t)$ denotes the $\mathrm{CO}_2$ level at time step $t$, $y(t)$ represents the sensor output at time step $t$. The contribution of individuals to the $\mathrm{CO}_2$ concentration in a building is inherently random due to variations in their activities and physiological factors. Specifically, \cite{rahman2021real} suggests that the contribution of an individual to the $\mathrm{CO}_2$ level can be modeled as a Gaussian random variable. Consequently, for a single individual, $u(t)$ is assumed to follow an independent and identical Gaussian distribution, i.e. $u(t) \sim \mathcal{N}_1(\bar{m} + m_u, \Sigma_u)$, where $\bar{m}$ reflects environmental influences, and $m_u$ and $\Sigma_u$ denote the individual emission rate and variability. For $n$ individuals in the building, the total contribution to the $\mathrm{CO}_2$ level can then be expressed as $u(t) \sim \mathcal{N}_1(\bar{m} + n m_u, n \Sigma_u)$, with the occupancy level directly influencing the mean and variance of the distribution.

In building automation, sensor measurements are commonly transmitted over communication networks for control and monitoring purposes. When the sensor measurements are accessible to untrusted third parties, the occupancy level, which is highly private information, can be inferred \cite{nekouei2022model}. Therefore, it is crucial to design the noise term $v(t)$ for privacy preservation of the probability distribution $\cN_1 (m, \Sigma)$, thereby protecting the privacy of the occupancy level. While previous research (e.g., \cite{kawano2020design, Yu2021}) has primarily focused on protecting deterministic inputs, it remains unclear how to ensure the privacy of a probability distribution.

\subsection{System Description}
Throughout this paper, we follow the convention by focusing on a finite datasets. In the context of dynamical systems, this corresponds to analyzing the system's properties within a finite time horizon.

Consider the following linear, discrete time, invariant system,
\begin{subequations}\label{eq:sys}
 \begin{align}
 \label{system_1}
	x(t+1) &= Ax(t)+Bu(t), \\
 \label{system_2}
    y(t) &= C x(t)+Du(t),
 \end{align}
\end{subequations}
where $x(t) \in \bR^{n}$ is the state, $u(t) \in \bR^m$ is the control input and $y(t)\in \bR^{q}$ is the measurement. All the matrices are of compatible dimensions. We define the initial state $x_0 := x(0)$.

For \eqref{eq:sys}, the output sequence $Y_t \in \bR^{(t+1)q}$ is described by
\begin{align}
    \label{eq:Yt}
    Y_t = O_tx_0 + N_t U_t,
\end{align}
where $O_t \in \bR^{(t+1)q \times n}$ and $N_t \in \bR^{(t+1)q \times (t+1)m}$ are 
\begin{align}
    \label{eq:Ot}
    O_t = \begin{bmatrix}
        C^{\top} & CA^{\top} & \dots & (CA^{t})^{\top}
    \end{bmatrix},
\end{align}
\begin{align}
    \label{eq:Nt}
    N_t:=\left[\begin{array}{ccccc}
D & 0 & \cdots & \cdots & 0 \\
C B & D & \ddots & & \vdots \\
C A B & C B & D & \ddots & \vdots \\
\vdots & \vdots & \ddots & \ddots & 0 \\
C A^{t-1} B & C A^{t-2} B & \cdots & C B & D
\end{array}\right].
\end{align}

To proceed with differential privacy analysis, we consider the output $y_{v}(t) = y(t) + v(t)$, where $v(t) \in \bR^{q}$ represents the added noise. From equation \eqref{eq:Yt}, $Y_{v,t} \in \bR^{(t+1)q}$ can be described by,
\begin{align}
    \label{eq:mech}
    Y_{v,t} = O_t x_0 + N_t U_t + V_t,
\end{align}
where $V_t$ is the noise to be designed.
Following the standard notations in differential privacy analysis, we call \eqref{eq:mech} a \emph{mechanism} \cite{dwork2006calibrating,dwork2006our}.
\subsection{Definition of Differential Privacy}
We are interested in evaluating differential privacy when the input data $(x_0, U_t)$ follows a certain probability distribution $\bP_{x_0} \times \bP_{U_t}$. To illustrate it clearly, we hereby define $\bP_{x_0} \times \bP_{U_t}$ as the probability distribution of the input.

Originally, differential privacy measures the privacy level of a mechanism based on the sensitivity of the output data $Y_{v,t}$ to the input data $(x_0, U_t)$. 
In our case, differential privacy measures the privacy level of a mechanism based on the sensitivity of the output data $Y_{v,t}$ to the probability distribution of the input $\bP_{x_0} \times \bP_{U_t}$. 
If two similar input pairs $(\bP_{x_0} \times \bP_{U_t}, \bP_{x_0'} \times \bP_{U_t'})$ result in significantly different outputs $(Y_{v,t}, Y_{v,t}')$, it indicates that the probability distribution of the input can be easily identified, suggesting weaker privacy. Thus, differential privacy is defined using pairs of probability distributions of inputs that satisfy specific adjacency relations. Using $p$-Wasserstein distance,
 the adjacency relationship is defined in the probability metric space, as given below.

\begin{secdefn}
\label{def:adj}
     Let $c>0$ and $p \in \mathbb{Z}_{+}$. Consider  Gaussian probability distributions $ \bP_{x_0} = \cN_{n}(m_{x_0}, \Sigma_{x_0}), \bP_{U_t} = \cN_{(t+1)m}(m_{U_t}, \Sigma_{U_t}), \bP_{x_0'} = \cN_{n}(m_{x_0'}, \Sigma_{x_0'}), \bP_{U_t'} = \cN_{(t+1)m}(m_{U_t'}, \Sigma_{U_t'})$. A pair of probability distribution of the inputs $ ((\bP_{x_0} \times \bP_{U_t}) , (\bP_{x_0^{\prime}} \times \bP_{U_t^{\prime}}))$ is said to belong to the binary relation $c$-adjacency under the $p$-Wasserstein distance if $\cW_{p}(\bP_{x_0} \times \bP_{U_t}, \bP_{x_0^{\prime}} \times \bP_{U_t^{\prime}}) \leq c$. The set of all pairs of the input probability Gaussian distribution that are $c$-adjacent under the $p$-Wasserstein distance is denoted by $\mathrm{Adj}_p^c$.
\end{secdefn}

In this paper, we specifically focus on Gaussian distributions as the probability distributions of inputs, as the Gaussian assumption is widely adopted and offers significant advantages in the analysis of linear systems.

Unlike \cite{kawano2020design}, we define the adjacency set using the $p$-Wasserstein distance. While other distance measures, such as the KL divergence and total variation, are also viable candidates, the Wasserstein distance offers a natural metric structure in the distribution space and is closely related to the Euclidean distance (see Remark \ref{rem:wa}). This is the primary reason for choosing the $p$-Wasserstein distance as the criteria to define adjacency relationship in this paper.

\begin{secrem}
\label{rem:wa}
    The adjacency relationship~\ref{def:adj} defined in this paper is a generalization of $\mathrm{Adj}_{2}^c$ in \cite{kawano2020design}. To illustrate this, consider the case where the inputs are deterministic data, i.e., $(\bP_{x_0} \times \bP_{U_t}) \times (\bP_{x_0^{\prime}} \times \bP_{U_t^{\prime}}) = (\delta_{x_0} \times \delta_{U_t}) \times (\delta_{x_0'} \times \delta_{U_t'})$, where $\delta_{x_0}$ denotes the Dirac delta function, which places all probability mass at $x_0$.
     The Dirac delta function can be viewed as a limiting case of a Gaussian distribution with variance tending to zero, i.e.,
     \begin{align*}
         \delta_{x_0}=\lim _{\sigma \rightarrow 0} \mathcal{N}\left(x_0, \sigma^2 I_n\right).
     \end{align*}
    Since the Wasserstein distance between two Dirac delta function is simply the Euclidean distance \cite{villani2009wasserstein}, we obtain
    \begin{align*}
        \cW_2(\bP_{x_0} \times \bP_{U_t}, \bP_{x_0^{\prime}} \times \bP_{U_t^{\prime}}) = |[x_0';U_t'] - [x_0';U_t']|_2,
    \end{align*}
    which exactly recovers $\mathrm{Adj}_2^p$ in \cite{kawano2020design}.
\end{secrem}

Now we are ready to define differential privacy of the mechanism \eqref{eq:mech}.

\begin{secdefn}
     Let $\left(\mathbb{R}^{(t+1) q}, \mathcal{F}, \mathbb{P}\right)$ be a probability space. The mechanism~\eqref{eq:mech} is said to be $(\varepsilon, \delta)$-differentially private for $\operatorname{Adj}_p^c$ at a finite time instant $t \in \mathbb{Z}_{+}$, if there exist $\varepsilon \geq 0$ and $\delta \geq 0$ such that

$$
\begin{aligned}
& \mathbb{P}\left(O_t x_0+N_t U_t+V_t \in \mathcal{S} \mid x_0 \sim \bP_{x_0}, U_t \sim \bP_{U_t} \right) \\
 \leq & \mathrm{e}^{\varepsilon} \mathbb{P}\left(O_t x_0^{\prime}+N_t U_t^{\prime}+V_t \in \mathcal{S} \mid x_0' \sim \bP_{x_0'}, U_t' \sim \bP_{U_t'}\right) \\
&+\delta, \quad \forall \mathcal{S} \in \mathcal{F}
\end{aligned}
$$

for any $((\bP_{x_0} \times \bP_{U_t}) , (\bP_{x_0^{\prime}} \times \bP_{U_t^{\prime}})) \in \operatorname{Adj}_p^c$.
\end{secdefn}

It is worth noting that although the setting in this paper is similar to previous differential privacy frameworks \cite{dwork2006calibrating, kawano2020design, wang2023differential}, our case presents more technical challenges. This is because, in the prior works \cite{kawano2020design, Yu2021}, only the mean of the final mechanism differs, which allows for the use of noise strategies developed in earlier literature (e.g., \cite{le2013differentially, balle2018improving}). In contrast, our formulation, which allows for differences in covariance, introduces additional complexities that prevent the direct application of existing noise-adding strategies.

\begin{secprob}
    In this paper, we are interested in answering the following problem.
    \begin{itemize}
        \item How can we design the noise to provide differential privacy for privacy preservation of probability distribution of inputs?
    \end{itemize} 
\end{secprob}

\section{Differential Privacy with Output Noise}
\label{sec:dp}
As clear from the definition, both $\varepsilon$ and $\delta$ depend on noise. In particular, we will demonstrate that the sensitivity of the dynamical system~\eqref{eq:sys} establishes a lower bound on the covariance matrix of the multivariate Gaussian noise required to achieve $(0, \delta)$-differential privacy. Henceforth, we refer to any mechanism in~\eqref{eq:mech} incorporating Gaussian noise as a Gaussian mechanism.

To proceed, we prove the following two lemmas, which are instrumental in proving Theorem~\ref{thm:dp}. The first lemma establishes an upper bound for the symmetrized KL divergence in terms of the 2-Wasserstein distance.
\begin{seclem}
\label{lem:disinq}
    For two Gaussian distributions $\bP_1 = \cN_{n}(m_1, \Sigma_1)$ and $\bP_2 = \cN_{n}(m_2, \Sigma_2)$, the symmetrized KL divergence and the Wasserstein distance between $\bP_1$ and $\bP_2$ satisfy the following inequality,
    \begin{align*}
    &\kl(\bP_1 \| \bP_2) +\kl(\bP_2 \| \bP_1) \\
    \leq & \frac{2}{\min\left(\lambda_{\min}\left(\Sigma_1\right), \lambda_{\min}\left(\Sigma_2\right)\right)} \cW_2^2(\bP_1, \bP_2),
\end{align*}
where $\kl(\bP_1 \| \bP_2)$ denotes the KL divergence between $\bP_1$ and $\bP_2$.
\end{seclem}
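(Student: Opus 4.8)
The plan is to reduce everything to the closed form of the Gaussian KL divergence and then estimate the resulting matrix expression against the Bures--Wasserstein formula \eqref{eq:2wa}. Write $\lambda_\ast := \min(\lambda_{\min}(\Sigma_1),\lambda_{\min}(\Sigma_2))$. Starting from $\kl(\bP_1\|\bP_2) = \tfrac12\big(\tr(\Sigma_2^{-1}\Sigma_1) - n + (m_1-m_2)^\top\Sigma_2^{-1}(m_1-m_2) + \ln\tfrac{\det\Sigma_2}{\det\Sigma_1}\big)$ and the symmetric expression for $\kl(\bP_2\|\bP_1)$, the log-determinant terms cancel upon addition, leaving
\[
\kl(\bP_1\|\bP_2) + \kl(\bP_2\|\bP_1) = \tfrac12(m_1-m_2)^\top(\Sigma_1^{-1}+\Sigma_2^{-1})(m_1-m_2) + \tfrac12\tr(\Sigma_1^{-1}\Sigma_2 + \Sigma_2^{-1}\Sigma_1 - 2I_n).
\]
I will bound the two summands separately and recombine, using $\cW_2^2(\bP_1,\bP_2) = |m_1-m_2|_2^2 + \tr(\Sigma_1+\Sigma_2-2(\Sigma_1^{1/2}\Sigma_2\Sigma_1^{1/2})^{1/2})$.

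For the mean term, $\Sigma_i^{-1}\preceq\lambda_{\min}(\Sigma_i)^{-1}I\preceq\lambda_\ast^{-1}I$ gives $(m_1-m_2)^\top(\Sigma_1^{-1}+\Sigma_2^{-1})(m_1-m_2)\le\tfrac{2}{\lambda_\ast}|m_1-m_2|_2^2$, so this summand is at most $\tfrac{1}{\lambda_\ast}|m_1-m_2|_2^2$, which already fits with slack inside $\tfrac{2}{\lambda_\ast}$ times the mean part of $\cW_2^2$. For the covariance term, set $\Delta := \Sigma_1-\Sigma_2$; using $\Sigma_2^{-1}-\Sigma_1^{-1} = \Sigma_1^{-1}\Delta\Sigma_2^{-1}$ and cyclicity of the trace,
\[
\tr(\Sigma_1^{-1}\Sigma_2 + \Sigma_2^{-1}\Sigma_1 - 2I_n) = \tr\!\big((\Sigma_2^{-1}-\Sigma_1^{-1})\Delta\big) = \tr(\Sigma_1^{-1}\Delta\Sigma_2^{-1}\Delta) = \big\|\Sigma_1^{-1/2}\Delta\Sigma_2^{-1/2}\big\|_F^2 .
\]
The key manipulation is the matrix analogue of $a-b = (\sqrt a-\sqrt b)(\sqrt a+\sqrt b)$, namely $\Delta = \Sigma_1^{1/2}E + E\Sigma_2^{1/2}$ with $E := \Sigma_1^{1/2}-\Sigma_2^{1/2}$, which yields $\Sigma_1^{-1/2}\Delta\Sigma_2^{-1/2} = E\Sigma_2^{-1/2} + \Sigma_1^{-1/2}E$; the triangle inequality together with $\|\Sigma_i^{-1/2}\|_{\mathrm{op}} = \lambda_{\min}(\Sigma_i)^{-1/2}\le\lambda_\ast^{-1/2}$ then bounds the covariance term by $\tfrac{4}{\lambda_\ast}\|\Sigma_1^{1/2}-\Sigma_2^{1/2}\|_F^2$.

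The remaining step, and the place where I expect the real work to lie, is to dominate $\|\Sigma_1^{1/2}-\Sigma_2^{1/2}\|_F^2$ by the Bures--Wasserstein term $\tr(\Sigma_1+\Sigma_2-2(\Sigma_1^{1/2}\Sigma_2\Sigma_1^{1/2})^{1/2})$ with the right constant. When $\Sigma_1$ and $\Sigma_2$ commute the two quantities coincide, so after simultaneous diagonalization the whole estimate collapses to the scalar inequality $\big(\tfrac{1}{\sqrt a}+\tfrac{1}{\sqrt b}\big)^2\le\tfrac{4}{\min(a,b)}$ applied eigenvalue-by-eigenvalue, and the lemma is immediate. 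In the non-commuting case the triangle-inequality bound above is slightly lossy, since $\|\Sigma_1^{1/2}-\Sigma_2^{1/2}\|_F^2$ strictly exceeds the Bures term; to recover the stated constant I would either retain the cross term in $\|E\Sigma_2^{-1/2}+\Sigma_1^{-1/2}E\|_F^2$ and exploit that $\Sigma_1^{1/2}E$ is symmetric, or work directly with the optimal transport map $T\succ0$ (characterized by $T\Sigma_1 T = \Sigma_2$), for which the Bures term equals $\tr\!\big((I-T)\Sigma_1(I-T)\big)\ge\lambda_\ast\|I-T\|_F^2$ while $\tr(\Sigma_1^{-1}\Sigma_2+\Sigma_2^{-1}\Sigma_1-2I_n)$ can be written as $\|\Sigma_1^{1/2}T\Sigma_1^{-1/2}\|_F^2 + \|\Sigma_1^{1/2}T^{-1}\Sigma_1^{-1/2}\|_F^2 - 2n$ and controlled through the eigenvalues of $T$. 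I anticipate no conceptual difficulty beyond this non-commutative bookkeeping; everything else is routine algebra, and Pinsker's inequality (used elsewhere in the paper) plays no role here.
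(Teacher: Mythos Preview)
Your argument is fine through the mean bound and the identity $\tr(\Sigma_1^{-1}\Sigma_2+\Sigma_2^{-1}\Sigma_1-2I_n)=\|\Sigma_1^{-1/2}\Delta\Sigma_2^{-1/2}\|_F^2$, but the covariance estimate does not close in the non-commuting case. After the triangle inequality you obtain $\tfrac12\cdot(\text{cov term})\le \tfrac{2}{\lambda_\ast}\|\Sigma_1^{1/2}-\Sigma_2^{1/2}\|_F^2$, whereas the target is $\tfrac{2}{\lambda_\ast}$ times the Bures quantity $\tr(\Sigma_1+\Sigma_2-2(\Sigma_1^{1/2}\Sigma_2\Sigma_1^{1/2})^{1/2})$; since $\|\Sigma_1^{1/2}-\Sigma_2^{1/2}\|_F^2$ \emph{dominates} the Bures term (strictly, unless $\Sigma_1,\Sigma_2$ commute), the chain breaks exactly where you say it does. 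Neither proposed repair is complete: in route~(a) the cross term $2\tr(\Sigma_2^{-1/2}E\,\Sigma_1^{-1/2}E)$ has no definite sign because $E=\Sigma_1^{1/2}-\Sigma_2^{1/2}$ is generically indefinite; in route~(b) your Frobenius identities are correct, but $\Sigma_1^{1/2}T^{\pm1}\Sigma_1^{-1/2}$ is not symmetric, so ``controlling through the eigenvalues of $T$'' does not bound its Frobenius norm, and the link between $\|I-T\|_F^2$ and those two terms is left open. The claim that only bookkeeping remains is too optimistic.

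The paper avoids ever comparing $\|\Sigma_1^{1/2}-\Sigma_2^{1/2}\|_F$ to the Bures distance. It sets $X:=\Sigma_2^{-1/2}\Sigma_1\Sigma_2^{-1/2}$, so that the covariance part of the symmetrized KL is exactly $\sum_i(\lambda_i(X)+\lambda_i(X)^{-1}-2)$, and proves the scalar inequality
\[
\lambda+\tfrac{1}{\lambda}-2 \;\le\; 2(\sqrt{\lambda}-1)^2+2\Big(\tfrac{1}{\sqrt{\lambda}}-1\Big)^2,\qquad \lambda>0,
\]
as a separate lemma. On the Wasserstein side it rewrites the Bures covariance term as $\tfrac12\tr\big(\Sigma_2(X^{1/2}-I)^2\big)+\tfrac12\tr\big(\Sigma_1(X^{-1/2}-I)^2\big)$ and bounds each summand below by $\tfrac12\lambda_\ast\tr((X^{\pm1/2}-I)^2)$; summing over eigenvalues of $X$ then matches the right-hand side of the scalar inequality and delivers the constant $2/\lambda_\ast$ directly. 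In other words, the missing conceptual ingredient in your write-up is this eigenvalue-by-eigenvalue scalar comparison after whitening by $\Sigma_2^{-1/2}$, which replaces the lossy detour through $\|\Sigma_1^{1/2}-\Sigma_2^{1/2}\|_F$.
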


\begin{proof}
    The proof is provided in Appendix ~\ref{app1}.
\end{proof}

Since the output $Y_{v,t}$ is an affine transformation of $(x_0, U_t)$, we aim to understand the relationship between the 2-Wasserstein distance of the output and that of the input. The reasons will be explained later. The following lemma provides an upper bound on the 2-Wasserstein distance under an affine transformation, incorporating an additional scaling factor.
\begin{seclem}
\label{lem:aff}
    Suppose $ x \in \bR^n$ follows a probability distribution $ \bP_x $ and $ x' $ follows another probability distribution $ \bP_{x'} $. Let $ y = Fx + v $ and $ y' = Fx' + v $, where $F \in \bR^{m \times n}$ and $ v $ follows the distribution $ \bP_v $ independent of $ \bP_{x}$ and $ \bP_{x'}$. Then, the following holds:
    \begin{align*}
        \cW_2(\bP_y, \bP_{y'}) \leq |F|_2\cW_2(\bP_x, \bP_{x'}),
    \end{align*}
    where $\bP_y$ denotes the pushforward probability distribution of $y$, $\bP_{y'}$ denotes the pushforward probability distribution of $y'$, and $|F|_2$ is the induced $2$-norm of the matrix $F$.
\end{seclem}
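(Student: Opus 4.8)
The plan is to prove the bound by exhibiting an explicit coupling of $\bP_y$ and $\bP_{y'}$ that is built from an optimal coupling of $\bP_x$ and $\bP_{x'}$ together with a \emph{shared} copy of the noise $v$. Recall that, by definition, $\cW_2^2(\bP_x,\bP_{x'}) = \inf_{\gamma \in \Gamma(\bP_x,\bP_{x'})} \int |x-x'|_2^2\, \mathrm{d}\gamma(x,x')$, and that on $\bR^n$ with finite second moments this infimum is attained by some optimal coupling $\gamma^\star$ (see, e.g., \cite{villani2009wasserstein}); alternatively one may work with a coupling that is $\varepsilon$-optimal and let $\varepsilon \to 0$ at the end.

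First I would set up a probability space carrying a pair $(X,X') \sim \gamma^\star$ and a random vector $V \sim \bP_v$ independent of $(X,X')$. Define $Y := FX + V$ and $Y' := FX' + V$, and let $\pi$ denote the joint law of $(Y,Y')$. The key observation is that $X \sim \bP_x$ and $V \sim \bP_v$ are independent, so the law of $Y = FX+V$ is precisely the pushforward distribution $\bP_y$; applying the same argument to $X' \sim \bP_{x'}$ shows that the second marginal of $\pi$ is $\bP_{y'}$. Hence $\pi \in \Gamma(\bP_y,\bP_{y'})$ is an admissible coupling.

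Second, I would bound the transport cost of this particular coupling. Since the shared noise cancels in $Y - Y'$,
\begin{align*}
\cW_2^2(\bP_y,\bP_{y'}) &\le \int |y - y'|_2^2 \,\mathrm{d}\pi(y,y') = \bE\big[\,|F(X - X')|_2^2\,\big] \\
&\le |F|_2^2\, \bE\big[\,|X-X'|_2^2\,\big] = |F|_2^2\, \cW_2^2(\bP_x,\bP_{x'}),
\end{align*}
where the first inequality is the definition of $\cW_2$ as an infimum over couplings, the second uses $|Fz|_2 \le |F|_2\,|z|_2$ for every $z \in \bR^n$, and the last equality is the optimality of $\gamma^\star$. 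Taking square roots yields the claimed inequality.

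I do not expect a genuine obstacle here; the argument is the standard gluing construction for couplings, and adding an independent noise term is a $1$-Lipschitz operation for $\cW_2$. The only points requiring a little care are the existence of an optimal coupling of $\bP_x$ and $\bP_{x'}$ (or, failing that, passing to the limit along near-optimal couplings) and the measurability of the map $(x,x',v)\mapsto(Fx+v,\,Fx'+v)$ so that $\pi$ is well defined — both routine. Note also that the proof uses neither Gaussianity of $\bP_x$, $\bP_{x'}$, $\bP_v$ nor any structure of $F$ beyond linearity; only the independence of $v$ and finiteness of the relevant second moments are needed.
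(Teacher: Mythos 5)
Your proposal is correct and follows essentially the same route as the paper: both build a coupling of $\bP_y$ and $\bP_{y'}$ by pushing an optimal coupling $\gamma^\star$ of $(\bP_x,\bP_{x'})$ through the map $(x,x',v)\mapsto(Fx+v,Fx'+v)$ with a shared independent noise, verify the marginals, and bound the cost by $|F|_2^2\,\bE|x-x'|_2^2$. Your added remarks on the existence of an optimal (or near-optimal) coupling and on not needing Gaussianity are sound but do not change the argument.
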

\begin{proof}
    The proof is provided in Appendix~\ref{app2}.
\end{proof}

Thanks to Lemma \ref{lem:disinq} and Lemma \ref{lem:aff}, we can establish an upper bound for the symmetrized KL divergence between outputs using the 2-Wasserstein distance $\cW_{2}(\bP_{x_0} \times \bP_{U_t}, \bP_{x_0^{\prime}} \times \bP_{U_t^{\prime}})$, which provides insight into designing output noise. The following theorem illustrates how to construct noise to protect the privacy of the probability distribution of the input.
\begin{secthm}\label{thm:dp}
Given $t \in \bZ_{+}$, $c > 0$, and $\delta \in (0, 1)$, the Gaussian mechanism~\eqref{eq:mech} induced by the white noise $V_t \sim \cN_{(t+1)q}(0, \Sigma_{V_t})$ is $(0,\delta)$ differentially private for $\operatorname{Adj}_2^c$ at $t$ if $\Sigma_{V_t} \succ 0$ is chosen as
\begin{align}\label{eq1:dp}
\min\left(\lambda_{\min}\left(\Sigma_{Y_{v,t}}\right), \lambda_{\min}\left(\Sigma_{Y_{v,t}'}\right)\right) \geq \frac{c^2\lambda_{\max}(\mathcal{O}_t)}{2 \delta^2},
\end{align}
where $\Sigma_{Y_{v,t}} = O_t \Sigma_{x_0} O_t^{\top} + N_t \Sigma_{U_t} N_t^{\top} + \Sigma_{V_t}$ , $\Sigma_{Y_{v,t}'} = O_t \Sigma_{x_0'} O_t^{\top} + N_t \Sigma_{U_t'} N_t^{\top} + \Sigma_{V_t}$ and
\begin{align}
\mathcal{O}_t:=\begin{bmatrix}
    O_t & N_t
\end{bmatrix}^{\top}\begin{bmatrix}
    O_t & N_t
\end{bmatrix}.
\end{align}
\end{secthm}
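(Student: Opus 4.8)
The plan is to reduce the differential privacy claim to a bound on the symmetrized KL divergence between the two output distributions, and then invoke Pinsker's inequality to obtain the $(0,\delta)$ guarantee. First I would observe that since $(x_0, U_t)$ is jointly Gaussian and the mechanism $Y_{v,t} = O_t x_0 + N_t U_t + V_t$ is an affine map of the input with independent Gaussian noise $V_t$, the output $Y_{v,t}$ is itself Gaussian with mean $O_t m_{x_0} + N_t m_{U_t}$ and covariance $\Sigma_{Y_{v,t}} = O_t \Sigma_{x_0} O_t^\top + N_t \Sigma_{U_t} N_t^\top + \Sigma_{V_t}$; likewise for the adjacent input $(\bP_{x_0'}, \bP_{U_t'})$. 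So we are comparing two genuine Gaussians $\bP_{Y_{v,t}}$ and $\bP_{Y_{v,t}'}$ on $\bR^{(t+1)q}$.

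Next I would apply Lemma~\ref{lem:disinq} to these two output Gaussians: the symmetrized KL divergence is at most $\frac{2}{\min(\lambda_{\min}(\Sigma_{Y_{v,t}}), \lambda_{\min}(\Sigma_{Y_{v,t}'}))}\,\cW_2^2(\bP_{Y_{v,t}}, \bP_{Y_{v,t}'})$. To control the Wasserstein distance of the outputs I would write $[x_0; U_t]$ as a single random vector with distribution $\bP_{x_0}\times\bP_{U_t}$, note that $Y_{v,t} = \mathcal{M}_t [x_0; U_t] + V_t$ with $\mathcal{M}_t = [O_t\ \ N_t]$, and apply Lemma~\ref{lem:aff} to get $\cW_2(\bP_{Y_{v,t}}, \bP_{Y_{v,t}'}) \le |\mathcal{M}_t|_2\, \cW_2(\bP_{x_0}\times\bP_{U_t}, \bP_{x_0'}\times\bP_{U_t'}) \le |\mathcal{M}_t|_2\, c$, using the $\operatorname{Adj}_2^c$ hypothesis. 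Since $|\mathcal{M}_t|_2^2 = \lambda_{\max}(\mathcal{M}_t^\top \mathcal{M}_t) = \lambda_{\max}(\mathcal{O}_t)$, combining gives $\kl(\bP_{Y_{v,t}}\|\bP_{Y_{v,t}'}) + \kl(\bP_{Y_{v,t}'}\|\bP_{Y_{v,t}}) \le \frac{2 c^2 \lambda_{\max}(\mathcal{O}_t)}{\min(\lambda_{\min}(\Sigma_{Y_{v,t}}), \lambda_{\min}(\Sigma_{Y_{v,t}'}))}$.

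Then the condition~\eqref{eq1:dp} forces this symmetrized KL divergence to be at most $4\delta^2$, hence each one-sided KL divergence is at most $4\delta^2$ (KL is nonnegative, so each term is bounded by the sum). By Pinsker's inequality, the total variation distance satisfies $\mathrm{TV}(\bP_{Y_{v,t}}, \bP_{Y_{v,t}'}) \le \sqrt{\tfrac12 \kl} \le \sqrt{2}\,\delta$; I would recheck constants here, since depending on the exact Pinsker normalization and whether one wants the sharper $\mathrm{TV}\le \sqrt{\kl/2}$ one may get $\mathrm{TV}\le \delta$ directly, which is what is needed. Finally, $\mathrm{TV} \le \delta$ immediately yields $\bP(Y_{v,t}\in\mathcal{S}) \le \bP(Y_{v,t}'\in\mathcal{S}) + \delta$ for every measurable $\mathcal{S}$, which is exactly $(0,\delta)$-differential privacy for $\operatorname{Adj}_2^c$.

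\textbf{Main obstacle.} The conceptual steps are clean given the two lemmas, so the real work is bookkeeping with constants: tracking the factor of $2$ between the symmetrized and one-sided KL, the exact form of Pinsker's inequality, and the squared-versus-unsquared $\delta$ in~\eqref{eq1:dp}, so that the chain $\kl \le 4\delta^2 \Rightarrow \mathrm{TV}\le \delta$ closes exactly. A secondary point to verify is that $\Sigma_{V_t}\succ 0$ (assumed) indeed makes both $\Sigma_{Y_{v,t}}$ and $\Sigma_{Y_{v,t}'}$ positive definite so that Lemma~\ref{lem:disinq} applies, and that the identification of $|\mathcal{M}_t|_2^2$ with $\lambda_{\max}(\mathcal{O}_t)$ matches the paper's definition of $\mathcal{O}_t$.
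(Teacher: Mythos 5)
Your proposal follows the paper's proof essentially step for step: view $Y_{v,t}$ as the affine image of $[x_0;U_t]$ plus independent Gaussian noise, use Lemma~\ref{lem:aff} with $|[O_t\ \ N_t]|_2^2=\lambda_{\max}(\mathcal{O}_t)$ and the $\operatorname{Adj}_2^c$ hypothesis to bound $\cW_2(\bP_{Y_{v,t}},\bP_{Y_{v,t}'})$, feed this into Lemma~\ref{lem:disinq} to bound the symmetrized KL divergence by $4\delta^2$ under~\eqref{eq1:dp}, and convert to a total-variation bound to conclude $(0,\delta)$-differential privacy.

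The one place your chain does not close as written is the constant you yourself flagged. Bounding a one-sided KL by the full symmetrized sum and then applying Pinsker gives only $\mathrm{TV}\le\sqrt{\tfrac12\cdot 4\delta^2}=\sqrt{2}\,\delta$, which is weaker than needed. The standard fix, and the one the paper uses, is to apply Pinsker in both directions and use $\min\bigl(\kl(\bP\|\bP'),\kl(\bP'\|\bP)\bigr)\le\tfrac12\bigl(\kl(\bP\|\bP')+\kl(\bP'\|\bP)\bigr)$, which yields
\begin{align*}
\mathrm{TV}(\bP_{Y_{v,t}},\bP_{Y_{v,t}'})\le\tfrac12\sqrt{\kl(\bP_{Y_{v,t}}\|\bP_{Y_{v,t}'})+\kl(\bP_{Y_{v,t}'}\|\bP_{Y_{v,t}})}\le\tfrac12\sqrt{4\delta^2}=\delta,
\end{align*}
after which $\mathrm{TV}\le\delta$ gives $(0,\delta)$-differential privacy exactly as you state. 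With that adjustment your argument coincides with the paper's proof; your side remarks (positive definiteness of the output covariances from $\Sigma_{V_t}\succ 0$, and the identification of the induced $2$-norm with $\lambda_{\max}(\mathcal{O}_t)$) are correct and unproblematic.
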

\begin{proof}
    The proof is reported in Appendix~\ref{app3}.
\end{proof}

We now give several remarks regarding Theorem~\ref{thm:dp}. First, only the matrix $\mathcal{O}_t$ depends on the system dynamics \eqref{eq:sys}. In fact, $\mathcal{O}_t$ measures the sensitivity of the system outputs to inputs. Second, condition \eqref{eq1:dp} involves the distribution $\bP_{Y_{v,t}}$ and $\bP_{Y_{v,t}'}$, meaning that the privacy performance is not uniform in $\mathrm{Adj}_2^c$; to see this, it is easy to check that the same $\Sigma_{V_t}$ leads to a smaller $\delta$ if the covariances in the pair $(\Sigma_{x_0} \times \Sigma_{U_t}, \Sigma_{x_0'} \times \Sigma_{U_t'})$ are larger. Nevertheless, the following corollary provides a way to design $(0,\delta)$ differential privacy for any pair in $\mathrm{Adj}_2^c$.
\begin{seccor}
\label{cor:dp}
    Given $t \in \bZ_{+}$, $c > 0$, and $\delta \in (0, 1)$, the Gaussian mechanism~\eqref{eq:mech} induced by the white noise $V_t \sim \cN_{(t+1)q}(0, \Sigma_{V_t})$ is $(0,\delta)$ differentially private for $\operatorname{Adj}_2^c$ at $t$ if $\Sigma_{V_t} \succ 0$ is chosen as
\begin{align}\label{eq3:dp}
\lambda_{\min}(\Sigma_{V_t}) \geq \frac{c^2\lambda_{\max}(\mathcal{O}_t)}{2 \delta^2}
\end{align}
\end{seccor}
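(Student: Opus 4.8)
The plan is to obtain Corollary~\ref{cor:dp} as an immediate consequence of Theorem~\ref{thm:dp}, by showing that the single condition~\eqref{eq3:dp} on $\lambda_{\min}(\Sigma_{V_t})$ already forces condition~\eqref{eq1:dp} to hold for \emph{every} adjacent pair of input distributions. First I would write out
$$
\Sigma_{Y_{v,t}} = O_t \Sigma_{x_0} O_t^{\top} + N_t \Sigma_{U_t} N_t^{\top} + \Sigma_{V_t},
$$
and note that $O_t \Sigma_{x_0} O_t^{\top} \succeq 0$ and $N_t \Sigma_{U_t} N_t^{\top} \succeq 0$ because $\Sigma_{x_0}$ and $\Sigma_{U_t}$ are covariance matrices, hence positive semidefinite. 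Consequently $\Sigma_{Y_{v,t}} \succeq \Sigma_{V_t}$, and the identical argument applied to the primed quantities gives $\Sigma_{Y_{v,t}'} \succeq \Sigma_{V_t}$.

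Next I would invoke monotonicity of the smallest eigenvalue with respect to the positive semidefinite order --- a direct consequence of the Courant--Fischer characterization (or of Weyl's inequality): $M \succeq N$ implies $\lambda_{\min}(M) \geq \lambda_{\min}(N)$. Applying this to the two relations above yields
$$
\min\left(\lambda_{\min}(\Sigma_{Y_{v,t}}),\, \lambda_{\min}(\Sigma_{Y_{v,t}'})\right) \;\geq\; \lambda_{\min}(\Sigma_{V_t}) \;\geq\; \frac{c^2 \lambda_{\max}(\mathcal{O}_t)}{2\delta^2},
$$
where the last step is exactly the hypothesis~\eqref{eq3:dp}. This is precisely inequality~\eqref{eq1:dp}, so Theorem~\ref{thm:dp} applies and the mechanism~\eqref{eq:mech} is $(0,\delta)$ differentially private for $\mathrm{Adj}_2^c$ at $t$. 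Since the right-hand side of~\eqref{eq3:dp} involves only $c$, $\delta$, and the sensitivity matrix $\mathcal{O}_t$ --- and not the particular covariances $\Sigma_{x_0}, \Sigma_{U_t}$ (or their primed counterparts) of the pair under consideration --- the guarantee is uniform over all of $\mathrm{Adj}_2^c$, which is exactly the improvement over Theorem~\ref{thm:dp} that the corollary claims.

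There is essentially no substantive obstacle in this argument; the only points to state carefully are that (i) the eigenvalue-monotonicity step relies on $\Sigma_{x_0}, \Sigma_{U_t}$ and their primed counterparts being positive semidefinite, which holds automatically for covariance matrices, and (ii) the prescribed $\Sigma_{V_t}$ is genuinely feasible, i.e.\ compatible with $\Sigma_{V_t} \succ 0$, because the right-hand side of~\eqref{eq3:dp} is strictly positive whenever $\begin{bmatrix} O_t & N_t \end{bmatrix} \neq 0$ --- the only degenerate case being a system with $C = 0$ and $D = 0$, for which the output carries no information about the input and privacy is trivial.
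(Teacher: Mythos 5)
Your proposal is correct and matches the paper's own argument: the paper likewise deduces the corollary from Theorem~\ref{thm:dp} by observing that $\min\left(\lambda_{\min}\left(\Sigma_{Y_{v,t}}\right), \lambda_{\min}\left(\Sigma_{Y_{v,t}'}\right)\right) \geq \lambda_{\min}(\Sigma_{V_t})$, which is exactly your positive-semidefiniteness plus eigenvalue-monotonicity step spelled out in more detail.
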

\begin{proof}
    The proof follows directly from Theorem~\ref{thm:dp} by noticing  that $\min\left(\lambda_{\min}\left(\Sigma_{Y_{v,t}}\right), \lambda_{\min}\left(\Sigma_{Y_{v,t}'}\right)\right) \geq \lambda_{\min}(\Sigma_{V_t})$.
\end{proof}

It is worth noting that the covariance $\Sigma_{V_t}$ can become significantly large if $\delta$ is chosen to be very small. This occurs because no constraints are imposed on $\Sigma_{x_0}$ and $\Sigma_{U_t}$. In pathological cases, these two covariances can approach zero, necessitating a large $\Sigma_{V_t}$. If the distribution of $x_0$ is publicly known, we can establish the following theorem, which may lead to a smaller $\Sigma_{V_t}$.
\begin{secthm}
\label{thm:dp2}
    Consider that the distribution of $x_0$ is public information. Given $t \in \bZ_{+}$, $c > 0$, and $\delta \in (0, 1)$, the Gaussian mechanism~\eqref{eq:mech} induced by the white noise $V_t \sim \cN_{(t+1)q}(0, \sigma^2 I_{(t+1)q})$ is $(0,\delta)$ differentially private for $\operatorname{Adj}_2^c$ at $t$ if $\sigma > 0$ is chosen as
\begin{align}\label{eq1:dp}
\lambda_{\min}(O_t \Sigma_{x_0} O_t^{\top}) + \sigma^2 \geq \frac{c^2\lambda_{\max}(N_t^{\top}N_t)}{2 \delta^2}.
\end{align}
\end{secthm}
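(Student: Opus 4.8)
The plan is to follow the same line of argument as in the proof of Theorem~\ref{thm:dp}, but to exploit the extra structure coming from $\bP_{x_0}$ being public: for an adjacent pair $((\bP_{x_0}\times\bP_{U_t}),(\bP_{x_0'}\times\bP_{U_t'}))$ the public component must coincide, i.e.\ $\bP_{x_0}=\bP_{x_0'}$ (in particular $\Sigma_{x_0}=\Sigma_{x_0'}$), so that the vector $O_t x_0 + V_t$ has the same Gaussian law in both mechanisms and is independent of the private input. This lets us treat $O_t x_0 + V_t$ as a single common noise term and invoke the contraction Lemma~\ref{lem:aff} with $F = N_t$ rather than $F = \begin{bmatrix}O_t & N_t\end{bmatrix}$, while the term $O_t\Sigma_{x_0}O_t^{\top}$ still contributes to the covariance floor.

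First I would reduce $(0,\delta)$-differential privacy to a total-variation bound. Since the adjacency relation is symmetric, the defining inequality with $\varepsilon = 0$ holding for all measurable $\cS$ is equivalent to $\mathrm{TV}(\bP_{Y_{v,t}},\bP_{Y_{v,t}'}) \le \delta$, where the output laws $\bP_{Y_{v,t}}$ and $\bP_{Y_{v,t}'}$ are Gaussian with covariances $\Sigma_{Y_{v,t}} = O_t\Sigma_{x_0}O_t^{\top} + N_t\Sigma_{U_t}N_t^{\top} + \sigma^2 I$ and $\Sigma_{Y_{v,t}'} = O_t\Sigma_{x_0}O_t^{\top} + N_t\Sigma_{U_t'}N_t^{\top} + \sigma^2 I$, because $(x_0,U_t,V_t)$ are jointly Gaussian and independent. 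Applying Pinsker's inequality in each direction together with $\min(a,b) \le \tfrac12(a+b)$ gives $\mathrm{TV}^2 \le \tfrac14\big(\kl(\bP_{Y_{v,t}}\|\bP_{Y_{v,t}'}) + \kl(\bP_{Y_{v,t}'}\|\bP_{Y_{v,t}})\big)$, and Lemma~\ref{lem:disinq} then bounds the right-hand side by $\tfrac{1}{2\min(\lambda_{\min}(\Sigma_{Y_{v,t}}),\lambda_{\min}(\Sigma_{Y_{v,t}'}))}\,\cW_2^2(\bP_{Y_{v,t}},\bP_{Y_{v,t}'})$.

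It remains to bound the two ingredients. For the numerator, write $Y_{v,t} = N_t U_t + w$ and $Y_{v,t}' = N_t U_t' + w$ with $w = O_t x_0 + V_t$; since $\bP_{x_0} = \bP_{x_0'}$, $w$ has the same law in both cases and is independent of $U_t$ and of $U_t'$, so Lemma~\ref{lem:aff} yields $\cW_2(\bP_{Y_{v,t}},\bP_{Y_{v,t}'}) \le |N_t|_2\,\cW_2(\bP_{U_t},\bP_{U_t'})$. Since $\cW_2^2(\mu_1\times\nu_1,\mu_2\times\nu_2) \ge \cW_2^2(\nu_1,\nu_2)$ (project any coupling onto the second marginals), we get $\cW_2(\bP_{U_t},\bP_{U_t'}) \le \cW_2(\bP_{x_0}\times\bP_{U_t},\bP_{x_0'}\times\bP_{U_t'}) \le c$, and with $|N_t|_2^2 = \lambda_{\max}(N_t^{\top}N_t)$ this gives $\cW_2^2(\bP_{Y_{v,t}},\bP_{Y_{v,t}'}) \le c^2\lambda_{\max}(N_t^{\top}N_t)$. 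For the denominator, since $N_t\Sigma_{U_t}N_t^{\top} \succeq 0$ we have $\Sigma_{Y_{v,t}} \succeq O_t\Sigma_{x_0}O_t^{\top} + \sigma^2 I$, so Weyl's inequality gives $\lambda_{\min}(\Sigma_{Y_{v,t}}) \ge \lambda_{\min}(O_t\Sigma_{x_0}O_t^{\top}) + \sigma^2$, and identically for $\Sigma_{Y_{v,t}'}$ because $\Sigma_{x_0'} = \Sigma_{x_0}$. Substituting both estimates yields $\mathrm{TV}^2 \le \tfrac{c^2\lambda_{\max}(N_t^{\top}N_t)}{2(\lambda_{\min}(O_t\Sigma_{x_0}O_t^{\top}) + \sigma^2)}$, which is $\le \delta^2$ precisely when $\sigma$ satisfies the stated inequality, completing the proof.

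The main obstacle is conceptual rather than computational: one must be precise about what ``adjacent'' means once $\bP_{x_0}$ is public, since both the legitimacy of absorbing $O_t x_0 + V_t$ into a common, input-independent noise in Lemma~\ref{lem:aff} and the fact that the same floor $O_t\Sigma_{x_0}O_t^{\top}$ appears in both output covariances rely on $\bP_{x_0} = \bP_{x_0'}$. The only genuinely new analytic input beyond the proof of Theorem~\ref{thm:dp} is the projection/tensorization property $\cW_2^2(\mu_1\times\nu_1,\mu_2\times\nu_2) \ge \cW_2^2(\nu_1,\nu_2)$, which follows by marginalizing an optimal coupling using the separability of the quadratic cost, or—restricting to the Gaussian case—can be read off directly from the closed form~\eqref{eq:2wa}.
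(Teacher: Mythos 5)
Your proposal is correct and follows exactly the route the paper intends for Theorem~\ref{thm:dp2}, whose proof is only sketched there as ``treat $O_t x_0$ the same as the noise $V_t$ and repeat the argument of Theorem~\ref{thm:dp}'': you absorb $O_t x_0 + V_t$ into a common input-independent noise, apply Lemma~\ref{lem:aff} with $F = N_t$, and rerun the Lemma~\ref{lem:disinq}/Pinsker chain. The details you add beyond the paper's one-line sketch — that public $\bP_{x_0}$ forces $\bP_{x_0}=\bP_{x_0'}$ in the adjacency pair, the marginal-projection bound $\cW_2(\bP_{U_t},\bP_{U_t'}) \le \cW_2(\bP_{x_0}\times\bP_{U_t},\bP_{x_0'}\times\bP_{U_t'})$, and the Weyl-type bound $\lambda_{\min}(\Sigma_{Y_{v,t}}) \ge \lambda_{\min}(O_t\Sigma_{x_0}O_t^{\top}) + \sigma^2$ — are exactly the steps the paper leaves implicit, and they are all valid.
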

\begin{proof}
    By treating $O_t x_0$ the same as the noise $V_t$, the proof is similar to Theorem~\ref{thm:dp} and is thus omitted.
\end{proof}

Theorem~\ref{thm:dp2} introduces a specific condition that constrains $\mathrm{Adj}_2^c$ to reduce the conservatism of Corollary~\ref{cor:dp}.  Furthermore, it is possible to have more efficient Gaussian mechanisms by imposing additional constraints on $\mathrm{Adj}_2^c$.

\begin{secrem}
    In the classical Gaussian mechanism (see e.g., \cite{dwork2006calibrating}, \cite{le2013differentially}), $\epsilon > 0$ is a necessary condition. However, in \cite{balle2018improving}, the authors provided sufficient condition for $(0, \delta)$-differential privacy with the Gaussian mechanism. The results presented in this paper generalize Theorem 2 in \cite{balle2018improving} to cases where statistical inputs are employed.
\end{secrem}

\section{Numerical Experiments}
\label{sec:sim}
In this section, we revisit the motivation example in Section~\ref{sec:pre} by numerical experiments. The system model is assumed to evolve according to
\begin{align*}
    x(t+1) & =0.9 x(t)+ u(t), \\
y(t) & =x(t)+v(t)
\end{align*}

We assume the distribution of $x_0$ is public and $x_0 \sim \cN_1(90, 10)$. Moreover, we set $\Bar{m} = 20$, $m_u = 1$ and $\Sigma_u = 0.1$. 

To design the noise, we apply Theorem~\ref{thm:dp2} with $ t = 2 $. By setting $ c = 2.02 $, the pair of input probability distributions parameterized by one individual and by two individual is in $\mathrm{Adj}_2^c$. Next, the value of $ \sigma $ in Theorem~\ref{thm:dp2} is calculated as $ \sigma = 13.9193 $ with $ \delta = 0.1 $. To examine the impact of privacy on performance, we also set $ \delta = 0.2 $ and compute $ \sigma = 6.9596 $ according to Theorem ~\ref{thm:dp2}. The performance of the sensor outputs is illustrated in Figure~\ref{fig:1} and Figure \ref{fig:2}. It can be observed that the outputs for one and two individuals are statistically similar, making it hard for an eavesdropper to accurately distinguish the occupancy level from three observations. Additionally, the output performance improves as privacy protection weakens, i.e., when $\delta$ is larger.

\begin{figure}
    \centering
    \includegraphics[width=1\linewidth]{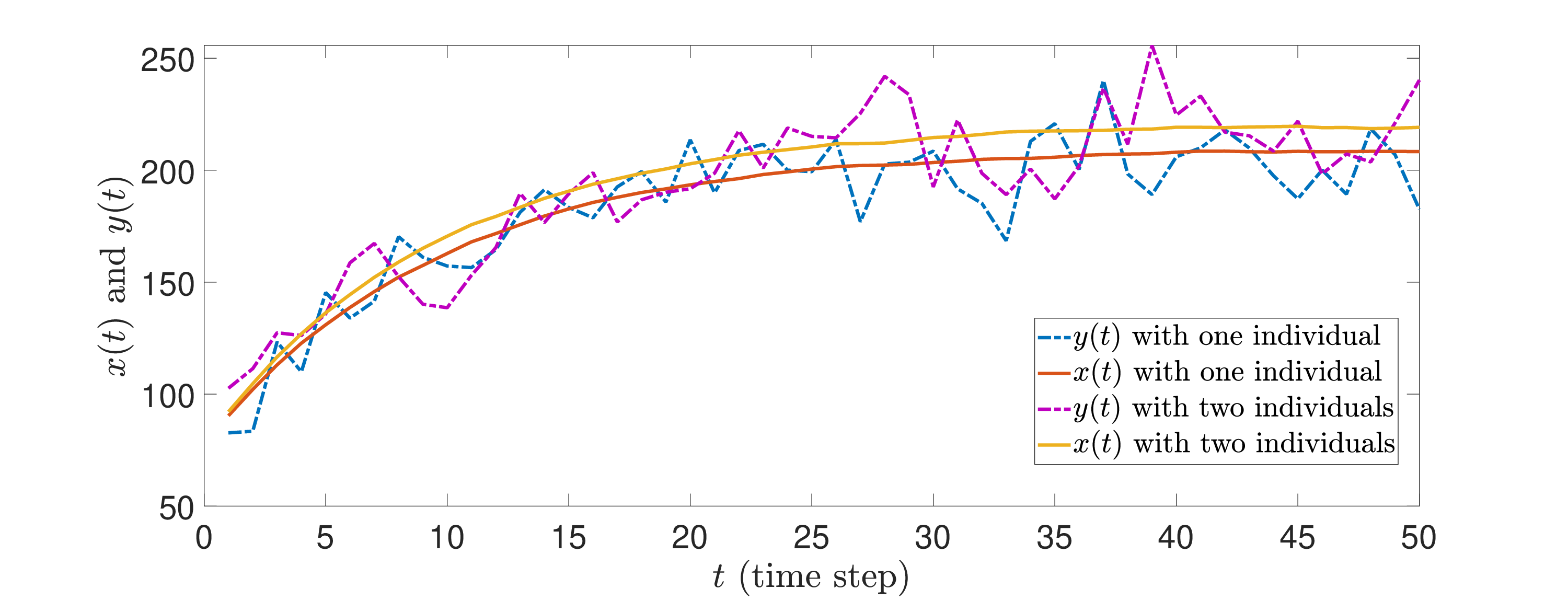}
    \caption{Output Performance with $\delta = 0.1$}
    \label{fig:1}
\end{figure}

\begin{figure}
    \centering
    \includegraphics[width=01\linewidth]{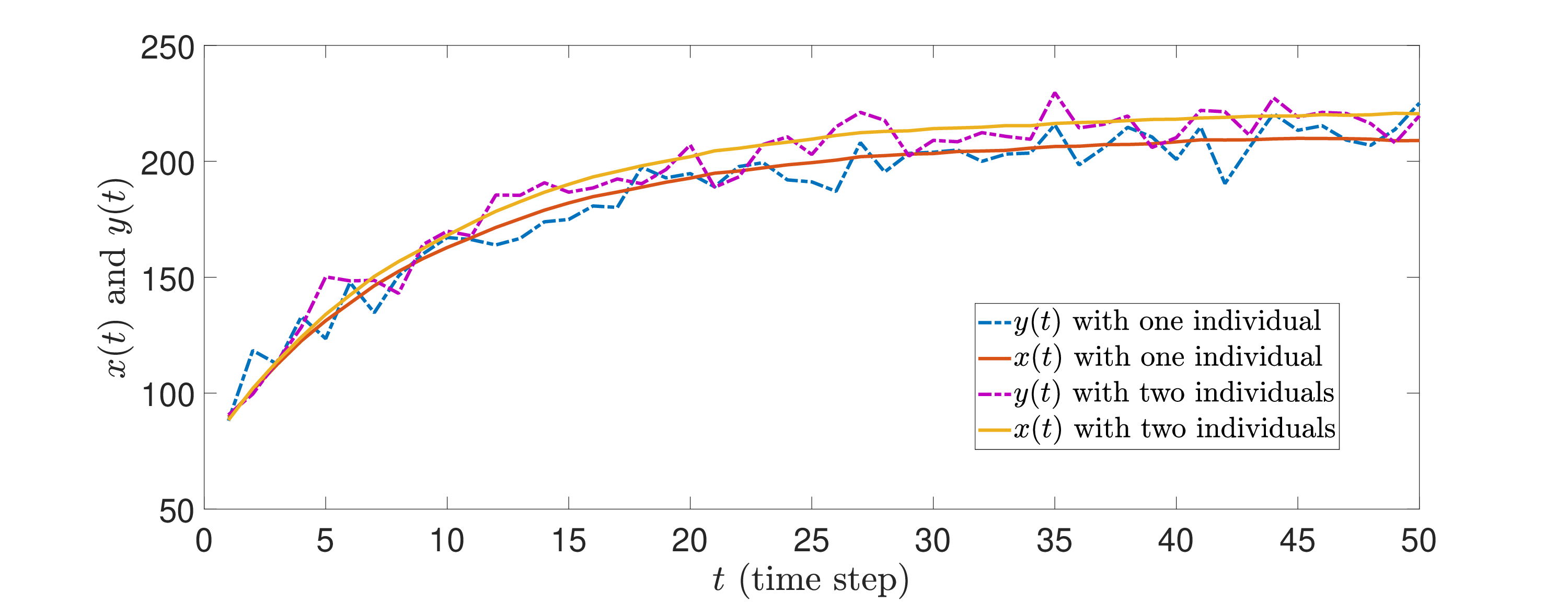}
    \caption{Output Performance with $\delta = 0.2$}
    \label{fig:2}
\end{figure}

\section{Conclusion}
\label{sec:con}
In this work, we formulated and analyzed the problem of privacy preservation for statistical inputs in linear time-invariant systems. By introducing an adjacency relationship and leveraging key relationships between statistical divergences, we established theoretical foundations for designing privacy-preserving Gaussian mechanisms. Future work may explore extensions to privacy protection of system parameters.
\appendices

\section{Proof of Lemma~\ref{lem:disinq}}
\label{app1}
To begin with, we first verify the following technical lemma, which is important in the proof of Lemma \ref{lem:disinq}.
\begin{seclem}
\label{lem:app1}
    For $ \lambda > 0$, it holds that 
    \begin{align*}
        \lambda^2 + \frac{1}{\lambda^2} - 2 \leq  2(\lambda - 1)^2 + 2(\frac{1}{\lambda} - 1)^2.
    \end{align*}
\end{seclem}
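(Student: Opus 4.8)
The claim is an elementary inequality in a single positive variable, so the plan is to reduce it to a polynomial inequality and factor. First I would substitute $t = \lambda - 1 + 1/\lambda - 1$? No — cleaner: note that both sides are naturally expressed in terms of $\lambda$ and $1/\lambda$, so set $a = \lambda - 1$ and $b = 1/\lambda - 1$. The left-hand side is $\lambda^2 + 1/\lambda^2 - 2 = (\lambda - 1/\lambda)^2 = (\lambda + 1/\lambda - 2)(\lambda + 1/\lambda + 2)$, while the right-hand side is $2a^2 + 2b^2$. Since $a - b = \lambda - 1/\lambda$ and $a + b = \lambda + 1/\lambda - 2 \ge 0$ by AM–GM, we have $2a^2 + 2b^2 = (a-b)^2 + (a+b)^2 = (\lambda - 1/\lambda)^2 + (\lambda + 1/\lambda - 2)^2$. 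Hence the desired inequality is equivalent to
\begin{align*}
(\lambda - 1/\lambda)^2 \le (\lambda - 1/\lambda)^2 + (\lambda + 1/\lambda - 2)^2,
\end{align*}
which holds trivially since $(\lambda + 1/\lambda - 2)^2 \ge 0$.

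The one point requiring a word of justification is $a + b = \lambda + 1/\lambda - 2 \ge 0$ for $\lambda > 0$, which is just $\lambda + 1/\lambda \ge 2$, i.e. AM–GM, or equivalently $(\sqrt{\lambda} - 1/\sqrt{\lambda})^2 \ge 0$. Everything else is the algebraic identity $2a^2 + 2b^2 = (a-b)^2 + (a+b)^2$, so there is essentially no obstacle; the only thing to be careful about is bookkeeping the substitution so that the identity $\lambda^2 + 1/\lambda^2 - 2 = (\lambda - 1/\lambda)^2$ is applied on the left and the expansion of $(\lambda-1)^2 + (1/\lambda - 1)^2$ is correctly matched to $\tfrac12\big[(\lambda-1/\lambda)^2 + (\lambda+1/\lambda-2)^2\big]$.

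An even more pedestrian alternative, if one prefers to avoid the clever regrouping, is to clear denominators: multiply through by $\lambda^2 > 0$ and collect everything on one side, obtaining a quartic polynomial in $\lambda$ that should factor as $(\lambda - 1)^2 \, q(\lambda)$ with $q$ having nonnegative coefficients (or itself a perfect square up to the AM–GM term). I expect the clean factored form to be essentially $(\lambda-1)^2(\lambda^2 + \text{something}) \ge 0$ after simplification, matching the $(\lambda + 1/\lambda - 2)^2 = (\lambda-1)^4/\lambda^2$ remainder from the first approach. Either way the inequality is strict-free and tight exactly at $\lambda = 1$. I would present the first approach, since it is shortest and makes the equality case transparent.
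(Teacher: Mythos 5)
Your proof is correct and in substance the same as the paper's: both reduce the claim to the observation that the right-hand side exceeds the left by exactly $(\lambda + 1/\lambda - 2)^2 \ge 0$ (the paper via the substitution $z = \lambda + 1/\lambda$, you via the identity $2a^2 + 2b^2 = (a-b)^2 + (a+b)^2$ with $a = \lambda - 1$, $b = 1/\lambda - 1$). One small remark: the AM--GM step is superfluous, since you only use $(a+b)^2 \ge 0$ and never the sign of $a+b$.
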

\begin{proof}
    Expanding the inequality yields
    \begin{align*}
         \lambda^2 + \frac{1}{\lambda^2} - 2 \leq  2(\lambda^2 + \frac{1}{\lambda^2} - 2\lambda - \frac{2}{\lambda}) + 4.
    \end{align*}
By defining $z:= \lambda+\frac{1}{\lambda}$, one can rewrite the given inequality as
\begin{align*}
    z^2 - 4 \leq 2(z^2 -2z),
\end{align*}
which is equivalent to
\begin{align*}
    -(z - 2)^2 \leq 0.
\end{align*}
Since the left-hand side is always non-positive, the inequality holds for all 
$z$. This concludes the proof.
\end{proof}

Now, we proceed to prove Lemma \ref{lem:disinq}. Recall the closed expression of the KL divergence for two Gaussian distributions as follows,
\begin{align*}
    \kl(\bP_1 \| \bP_2)  = &\frac{1}{2} [(m_2 - m_1)^{\top} \Sigma_2^{-1} (m_2 - m_1)+ \tr(\Sigma_2^{-1} \Sigma_1)   \\
    &- n + \log \frac{\det \Sigma_2}{\det \Sigma_1}].
\end{align*}
Then, one can compute the symmetrized KL divergence as follows,
\begin{align*}
    &\kl(\bP_1 \| \bP_2) + \kl(\bP_2 \| \bP_1)  \\
    = &\frac{1}{2} [(m_2 - m_1)^{\top} (\Sigma_1^{-1} + \Sigma^{-1}_2) (m_2 - m_1)+ \tr(\Sigma_2^{-1} \Sigma_1)   \\
    &+ \tr(\Sigma_1^{-1} \Sigma_2) - 2n ].
\end{align*}
Next, we construct upper bounds for each term step by step.
First, we show the upper bound of the term $(m_2 - m_1)^{\top} (\Sigma_1^{-1} + \Sigma_2^{-1}) (m_2 - m_1)$. By Rayleigh’s Entropy Theorem \cite{horn2012matrix}, it yields that
\begin{align}
\label{eq:mean}
    &(m_2 - m_1)^{\top} (\Sigma_1^{-1} + \Sigma_2^{-1}) (m_2 - m_1) \nonumber \\
    \leq & (\frac{1}{\lambda_{\min} (\Sigma_1)} + \frac{1}{\lambda_{\min} (\Sigma_2)}) |m_2 - m_1|_2^2 \nonumber \\
    \leq & \frac{2}{\min (\lambda_{\min} (\Sigma_1), \lambda_{\min} (\Sigma_2))} |m_2 - m_1|_2^2.
\end{align}
Define $\bS_{++}^n \ni X : =\Sigma_2^{-\frac{1}{2}} \Sigma_1 \Sigma_2^{-\frac{1}{2}}$. We have 
\begin{align*}
    \tr(\Sigma_2^{-1} \Sigma_1) +\tr(\Sigma_1^{-1} \Sigma_2) = \sum_{i = 0}^n (\lambda_{i}(X) + \frac{1}{\lambda_{i}(X)}),
\end{align*}
where $\lambda_{i}(X)$ is $i$-th smallest eigenvalue of $X$. Since $X$ is positive definite, we have $\lambda_{i}(X) > 0$. Using Lemma~\ref{lem:app1} gives 
\begin{align}
\label{eq:ineq_app1_1}
    &\tr(\Sigma_2^{-1} \Sigma_1) +\tr(\Sigma_1^{-1} \Sigma_2) - 2n \nonumber\\
    \leq & \sum_{i = 1}^n 2(\sqrt{\lambda_i(X)} -1)^2 + 2(\frac{1}{\sqrt{\lambda_i(X)}} - 1)^2. 
\end{align}
The covariance term in 2-Wasserstein distance can be rewritten as
\begin{align*}
    &\tr ( \Sigma_1 + \Sigma_2 - 2 (\Sigma_1^{1/2} \Sigma_2 \Sigma_1^{1/2})^{1/2} ) \\
    =& \frac{1}{2} \tr (\Sigma_2 (X^{\frac{1}{2}} - I_n)^2) + \frac{1}{2} \tr (\Sigma_1 (X^{-\frac{1}{2}} - I_n)^2) \\
    \geq & \frac{1}{2}  \lambda_{\min}(\Sigma_2)\tr ( (X^{\frac{1}{2}} - I_n)^2) + \frac{1}{2} \lambda_{\min}(\Sigma_1) \tr ( (X^{-\frac{1}{2}} - I_n)^2) \\
    \geq & \frac{1}{2} \min (\lambda_{\min}(\Sigma_1), \lambda_{\min}(\Sigma_2)) \\
    & \times \left(\tr \left( (X^{\frac{1}{2}} - I_n)^2\right) + \tr \left( (X^{-\frac{1}{2}} - I_n)^2\right)\right) 
\end{align*}
Further incorporating~\eqref{eq:ineq_app1_1} yields,
\begin{align*}
    & \min (\lambda_{\min}(\Sigma_1), \lambda_{\min}(\Sigma_2)) \\
    & \times \left(\tr \left( (X^{\frac{1}{2}} - I_n)^2\right) + \tr \left( (X^{-\frac{1}{2}} - I_n)^2\right)\right) \\
=&   \min (\lambda_{\min}(\Sigma_1), \lambda_{\min}(\Sigma_2)) \\ 
& \times \left(\sum_{i = 1}^n (\sqrt{\lambda_i(X)} -1)^2 + (\frac{1}{\sqrt{\lambda_i(X)}} - 1)^2\right) \\
\geq & \frac{1}{2} \min (\lambda_{\min}(\Sigma_1), \lambda_{\min}(\Sigma_2)) \\ 
& \times \left(\tr(\Sigma_2^{-1} \Sigma_1) +\tr(\Sigma_1^{-1} \Sigma_2) - 2n\right).
\end{align*}
This further implies that
\begin{align}
\label{eq:ineq_app1_2}
    &\tr ( \Sigma_1 + \Sigma_2 - 2 (\Sigma_1^{1/2} \Sigma_2 \Sigma_1^{1/2})^{1/2} ) \nonumber\\
    \geq& \frac{1}{4} \min (\lambda_{\min}(\Sigma_1), \lambda_{\min}(\Sigma_2)) \nonumber\\ &\left(\tr(\Sigma_2^{-1} \Sigma_1) +\tr(\Sigma_1^{-1} \Sigma_2) - 2n\right).
\end{align}
Combining~\eqref{eq:mean} and~\eqref{eq:ineq_app1_2}, we have
\begin{align*}
    &\kl(\bP_1 \| \bP_2) +\kl(\bP_2 \| \bP_1) \\
    \leq & \frac{2}{\min\left(\lambda_{\min}\left(\Sigma_1\right), \lambda_{\min}\left(\Sigma_2\right)\right)} \cW_2^2(\bP_1, \bP_2),
\end{align*}
which completes the proof.\red

\section{Proof of Lemma~\ref{lem:aff}}
\label{app2}
 Let $\gamma^* \in \Gamma(\mathbb{P}_x, \mathbb{P}_{x'})$ be the optimal coupling for $\mathcal{W}_2(\mathbb{P}_x, \mathbb{P}_{x'})$, i.e.,
\begin{align*}
        \mathcal{W}_2^2\left(\mathbb{P}_x, \mathbb{P}_{x'}\right) = \mathbb{E}_{(x, x') \sim \gamma^*} \left[ |x - x'|_2^2 \right].
\end{align*}
Define the coupling $\gamma$ for $(y, y')$ as:
\begin{align*}
    y = Fx + v, \quad y' = Fx' + v,
\end{align*}
    where $(x, x') \sim \gamma^*$ and $v \sim \mathbb{P}_v$. Then, we can verify that
\begin{align*}
    &\int_{\bR^m \times \bR^m} \mathbf{1}_{S \times \bR^m }(y, y') d \gamma\left(y, y^{\prime}\right) \\
    =&\int_{\bR^m}\int_{\bR^n \times \bR^n} \mathbf{1}_{S}(F x+v) \mathbf{1}_{\bR^m}(Fx'+v)d \gamma^*\left(x, x^{\prime}\right) d \mathbb{P}_v(v)\\
    =&\int_{\bR^m}\int_{\bR^n \times \bR^n} \mathbf{1}_{S}(F x+v) d \gamma^*\left(x, x^{\prime}\right) d \mathbb{P}_v(v), \forall S \in \mathcal{B}(\mathbb{R}^m)
\end{align*}

Since $\gamma^{*}$ is a valid coupling and $v$ is independent of $x$, we have
\begin{align*}&\int_{\bR^m}\int_{\bR^n \times \bR^n} \mathbf{1}_{S}(F x+v) d \gamma^*\left(x, x^{\prime}\right) d \mathbb{P}_v(v) \\
    =&\int_{\mathbb{R}^m} \int_{\mathbb{R}^n} \mathbf{1}_{S}(F x+v) d \mathbb{P}_x(x) d \mathbb{P}_v(v) =\mathbb{P}_y\left(S\right).
\end{align*}

Similarly, it holds that

\begin{align*}
    \int_{\bR^m \times \bR^m} \mathbf{1}_{\bR^m \times S }(y, y') d \gamma\left(y, y^{\prime}\right) = \mathbb{P}_{y^{\prime}}\left(S\right),
\end{align*}
which shows that $\gamma$ is a valid coupling.

Then, it can be calculated that
\begin{align*}
    \mathbb{E}_{(y, y') \sim \gamma} \left[ |y - y'|_2^2 \right] = \mathbb{E}_{(x, x') \sim \gamma^*}  \left[ |F(x - x')|_2^2 \right].
\end{align*}
By the definition of $|F|_2$, we have
\begin{align*}
    \mathbb{E}_{(y, y') \sim \gamma} \left[ |y - y'|_2^2 \right] \leq |F|_2^2 \cdot \mathbb{E}_{(x, x') \sim \gamma^*} \left[ |x - x'|_2^2 \right].
\end{align*}
Recall that the squared 2-Wasserstein distance is defined by
\begin{align*}
    \mathcal{W}_2^2\left(\mathbb{P}_y, \mathbb{P}_{y'}\right) = \inf_{\gamma \in \Gamma(\mathbb{P}_y, \mathbb{P}_{y'})} \mathbb{E}_{(y, y') \sim \gamma} \left[ |y - y'|_2^2 \right],
\end{align*} 
 it yields that
\begin{align*}
    \mathcal{W}_2^2\left(\mathbb{P}_y, \mathbb{P}_{y'}\right) &\leq \mathbb{E}_{(y, y') \sim \gamma} \left[ |y - y'|_2^2 \right]\\
    &\leq |F|_2^2 \mathbb{E}_{(x, x') \sim \gamma^*} \left[ |x - x'|_2^2 \right] \\
    &= |F|_2^2 \mathcal{W}_2^2\left(\mathbb{P}_x, \mathbb{P}_{x'}\right).
\end{align*}
This completes the proof. \red

\section{Proof of Theorem~\ref{thm:dp}}
\label{app3}
Let $\bP_{Y_{v,t}}$ denote the probability distribution of $Y_{v,t}$, $\bP_{Y_{v,t}'}$ denote the probability distribution of $Y_{v,t}$. The covariances of two Gaussian distributions can be computed as $\Sigma_{Y_{v,t}} = O_t \Sigma_{x_0} O_t^{\top} + N_t \Sigma_{U_t} N_t^{\top} + \Sigma_{V_t}$ and $\Sigma_{Y_{v,t}'} = O_t \Sigma_{x_0'} O_t^{\top} + N_t \Sigma_{U_t'} N_t^{\top} + \Sigma_{V_t}$.
By Lemma~\ref{lem:aff} and Definition~\ref{def:adj}, we have
\begin{align*}
    \cW_2^2(\bP_{Y_{v,t}}, \bP_{Y_{v,t}'}) \leq \lambda_{\max}(\mathcal{O}_t)c^2.
\end{align*}
Since $\bP_{Y_{v,t}}$ and  $\bP_{Y_{v,t}'}$ follow Gaussian distributions, it yields from Lemma~\ref{lem:disinq} that 
\begin{align*}
    &\kl(\bP_{Y_{v,t}} \| \bP_{Y_{v,t}'}) + \kl(\bP_{Y_{v,t}'} \| \bP_{Y_{v,t}}) \\
    \leq& \frac{2}{\min\left(\lambda_{\min}\left(\Sigma_{Y_{v,t}}\right), \lambda_{\min}\left(\Sigma_{Y_{v,t}'}\right)\right)} \cW_2^2(\bP_{Y_{v,t}}, \bP_{Y_{v,t}'}) \\
    \leq& \frac{2\lambda_{\max}(\mathcal{O}_t)c^2}{\min\left(\lambda_{\min}\left(\Sigma_{Y_{v,t}}\right), \lambda_{\min}\left(\Sigma_{Y_{v,t}'}\right)\right)}.
\end{align*}
From Pinsker's inequality \cite{cover1999elements}, we have
\begin{align*}
    &\mathrm{TV}(\bP_{Y_{v,t}}, \bP_{Y_{v,t}'}) \\
    \leq& \frac{1}{2}\sqrt{\kl(\bP_{Y_{v,t}} \| \bP_{Y_{v,t}'}) + \kl(\bP_{Y_{v,t}'} \| \bP_{Y_{v,t}}) },
\end{align*}
where $ \mathrm{TV}(\bP_{Y_{v,t}}, \bP_{Y_{v,t}'})$ is the total variation between $\bP_{Y_{v,t}}$ and $\bP_{Y_{v,t}'}$.

From the definition of total variation, it follows immediately that the mechanism~\eqref{eq:mech} is $(0, \delta)$ differentially private if $\mathrm{TV}(\bP_{Y_{v,t}}, \bP_{Y_{v,t}'}) \leq \delta$, which is ensured by~\eqref{eq1:dp}. \red

\bibliographystyle{ieeetr}
\bibliography{ref}

\begin{thebibliography}{10}

\bibitem{gupta2009networked}
R.~A. Gupta and M.-Y. Chow, ``Networked control system: Overview and research trends,'' {\em IEEE transactions on industrial electronics}, vol.~57, no.~7, pp.~2527--2535, 2009.

\bibitem{ebadat2015regularized}
A.~Ebadat, G.~Bottegal, D.~Varagnolo, B.~Wahlberg, and K.~H. Johansson, ``Regularized deconvolution-based approaches for estimating room occupancies,'' {\em IEEE Transactions on Automation Science and Engineering}, vol.~12, no.~4, pp.~1157--1168, 2015.

\bibitem{tourani2017security}
R.~Tourani, S.~Misra, T.~Mick, and G.~Panwar, ``Security, privacy, and access control in information-centric networking: A survey,'' {\em IEEE communications surveys \& tutorials}, vol.~20, no.~1, pp.~566--600, 2017.

\bibitem{cortes2016differential}
J.~Cort{\'e}s, G.~E. Dullerud, S.~Han, J.~Le~Ny, S.~Mitra, and G.~J. Pappas, ``Differential privacy in control and network systems,'' in {\em 2016 IEEE 55th Conference on Decision and Control (CDC)}, pp.~4252--4272, IEEE, 2016.

\bibitem{wang2022decentralized}
Y.~Wang and H.~V. Poor, ``Decentralized stochastic optimization with inherent privacy protection,'' {\em IEEE Transactions on Automatic Control}, vol.~68, no.~4, pp.~2293--2308, 2022.

\bibitem{wang2022quantization}
Y.~Wang and T.~Ba{\c{s}}ar, ``Quantization enabled privacy protection in decentralized stochastic optimization,'' {\em IEEE Transactions on Automatic Control}, vol.~68, no.~7, pp.~4038--4052, 2022.

\bibitem{le2013differentially}
J.~Le~Ny and G.~J. Pappas, ``Differentially private filtering,'' {\em IEEE Transactions on Automatic Control}, vol.~59, no.~2, pp.~341--354, 2013.

\bibitem{kawano2020design}
Y.~Kawano and M.~Cao, ``Design of privacy-preserving dynamic controllers,'' {\em IEEE Transactions on Automatic Control}, vol.~65, no.~9, pp.~3863--3878, 2020.

\bibitem{Yu2021}
Y.~Kawano, K.~Kashima, and M.~Cao, ``Modular control under privacy protection: Fundamental trade-offs,'' {\em Automatica}, vol.~127, p.~109518, 2021.

\bibitem{liu2024design}
L.~Liu, Y.~Kawano, and M.~Cao, ``Design of stochastic quantizers for privacy preservation,'' {\em arXiv preprint arXiv:2403.03048}, 2024.

\bibitem{nekouei2022model}
E.~Nekouei, H.~Sandberg, M.~Skoglund, and K.~H. Johansson, ``A model randomization approach to statistical parameter privacy,'' {\em IEEE Transactions on Automatic Control}, vol.~68, no.~2, pp.~839--850, 2022.

\bibitem{bassi2020statistical}
G.~Bassi, E.~Nekouei, M.~Skoglund, and K.~H. Johansson, ``Statistical parameter privacy,'' {\em Privacy in Dynamical Systems}, pp.~65--82, 2020.

\bibitem{panaretos2019statistical}
V.~M. Panaretos and Y.~Zemel, ``Statistical aspects of wasserstein distances,'' {\em Annual review of statistics and its application}, vol.~6, no.~1, pp.~405--431, 2019.

\bibitem{rahman2021real}
H.~Rahman and H.~Han, ``Real-time ventilation control based on a bayesian estimation of occupancy,'' in {\em Building Simulation}, vol.~14, pp.~1487--1497, Springer, 2021.

\bibitem{dwork2006calibrating}
C.~Dwork, F.~McSherry, K.~Nissim, and A.~Smith, ``Calibrating noise to sensitivity in private data analysis,'' in {\em Theory of Cryptography: Third Theory of Cryptography Conference, TCC 2006, New York, NY, USA, March 4-7, 2006. Proceedings 3}, pp.~265--284, Springer, 2006.

\bibitem{dwork2006our}
C.~Dwork, K.~Kenthapadi, F.~McSherry, I.~Mironov, and M.~Naor, ``Our data, ourselves: Privacy via distributed noise generation,'' in {\em Advances in cryptology-EUROCRYPT 2006: 24th annual international conference on the theory and applications of cryptographic techniques, st. Petersburg, Russia, May 28-June 1, 2006. proceedings 25}, pp.~486--503, Springer, 2006.

\bibitem{villani2009wasserstein}
C.~Villani and C.~Villani, ``The wasserstein distances,'' {\em Optimal transport: old and new}, pp.~93--111, 2009.

\bibitem{wang2023differential}
L.~Wang, I.~R. Manchester, J.~Trumpf, and G.~Shi, ``Differential initial-value privacy and observability of linear dynamical systems,'' {\em Automatica}, vol.~148, p.~110722, 2023.

\bibitem{balle2018improving}
B.~Balle and Y.-X. Wang, ``Improving the gaussian mechanism for differential privacy: Analytical calibration and optimal denoising,'' in {\em International Conference on Machine Learning}, pp.~394--403, PMLR, 2018.

\bibitem{horn2012matrix}
R.~A. Horn and C.~R. Johnson, {\em Matrix analysis}.
\newblock Cambridge university press, 2012.

\bibitem{cover1999elements}
T.~M. Cover, {\em Elements of information theory}.
\newblock John Wiley \& Sons, 1999.

\end{thebibliography}
\end{document}